\def\versy{conf}
\def\webpage{web}
\def\conf{conf}
\def\versy{web}
\def\webpage{web}
\def\conf{conf}
\newtheorem{propo}[theorem]{Proposition}
\newtheorem{remy}[theorem]{Remark}
\newlength\myindent 
\def\MHD{\mbox{MHD}}
\title{A Fast Exponential Time Algorithm for
Max Hamming Distance X3SAT}
\titlerunning{A Fast Algorithm for Max Hamming Distance X3SAT}
\author[2,3]{Gordon Hoi$^1$, Sanjay Jain$^2$ and Frank Stephan}
\affil[1]{School of Computing, National University of Singapore,
13 Computing Drive, Block COM1, Singapore 117417,
Republic of Singapore, \texttt{e0013185@u.nus.edu}}
\affil[2]{School of Computing, National University of Singapore,
COM1, 13 Computing Drive, Singapore 117417, Republic of Singapore,
\texttt{sanjay@comp.nus.edu.sg and fstephan@comp.nus.edu.sg}}
\affil[3]{Department of Mathematics, National University of Singapore,
10 Lower Kent Ridge Road, Block S17, Singapore 119076, Republic of Singapore;
\newline \mbox{ } \newline
S.~Jain and F.~Stephan were
supported in part in part by the Singapore Ministry of Education Academic
Research Fund Tier 2 grant MOE2016-T2-1-019 / R146-000-234-112.
Additionally S.~Jain was supported by NUS grant C252-000-087-001.}
\authorrunning{G.~Hoi, S.~Jain and F.~Stephan}
\title{A Fast Exponential Time Algorithm for
Max Hamming Distance X3SAT}
\titlerunning{A Fast Algorithm for Max Hamming Distance X3SAT}
\author{Gordon Hoi}{School of Computing,
National University of Singapore, Singapore 117417, Republic of Singapore}
{e0013185@u.nus.edu}{}{}
\author{Sanjay Jain}{School of Computing, National University of
Singapore, Singapore 117417, Republic of Singapore.
S.~Jain was supported in part by the
NUS grant C-252-000-087-001 and
by Singapore Ministry of Education Academic Research Fund Tier 2 grant
MOE2016-T2-1-019 / R1467-000-234-112.}{sanjay@comp.nus.edu.sg}{}{}
\author{Frank Stephan}{Department of Mathematics, National University of
Singapore, Singapore 119076, Republic of Singapore and School of Computing,
National University of Singapore, Singapore 117417, Republic of Singapore.
F.~Stephan was supported in part by the
Singapore Ministry of Education Academic Research Fund Tier 2 grant
MOE2016-T2-1-019 / R1467-000-234-112.}{fstephan@comp.nus.edu.sg}{}{}
\authorrunning{G.~Hoi, S.~Jain and F.~Stephan}
\date{\today}
\keywords{X3SAT Problem, Maximum Hamming Distance of Solutions, Exponential
Time Algorithms, DPLL Algorithms}
\begin{document}
\ifx\versy\conf
\nocite{*}
\begin{CCSXML}
<ccs2012>
<concept>
<concept_id>10003752.10003809</concept_id>
<concept_desc>Theory of computation~Design and analysis of algorithms</concept_desc>
<concept_significance>500</concept_significance>
</concept>
</ccs2012>
\end{CCSXML}
\ccsdesc[500]{Theory of computation~Design and analysis of algorithms}
\Copyright{Sanjay Jain, Gordon Hoi and Frank Stephan}
\funding{S.~Jain and F.~Stephan were
supported in part in part by the Singapore Ministry of Education Academic
Research Fund Tier 2 grant MOE2016-T2-1-019 / R146-000-234-112.
Additionally S.~Jain was supported in part by NUS grant C252-000-087-001.}
\fi
\maketitle

\begin{abstract}
\noindent
X3SAT is the problem of whether one can satisfy a given set of clauses
with up to three literals such that in every clause, exactly one literal
is true and the others are false. A related question is to determine
the maximal Hamming distance between two solutions of the instance.
Dahll\"of provided an algorithm for Maximum Hamming Distance XSAT,
which is more complicated than the same problem for X3SAT,
with a runtime of $O(1.8348^n)$;
Fu, Zhou and Yin considered Maximum Hamming Distance for
X3SAT and found for this
problem an algorithm with runtime $O(1.6760^n)$.
In this paper, we propose an algorithm in $O(1.3298^n)$
time to solve the Max Hamming Distance X3SAT problem;
the algorithm actually counts for each $k$ the number of pairs of solutions
which have Hamming Distance $k$.
\end{abstract}

\ifx\versy\conf
\newpage
\fi

\section{Introduction}  

\noindent
Given a Boolean formula $\phi$ in conjunctive normal form, the
satisfiability (SAT) problem seeks to know if there are possible truth
assignments to the variables such that $\phi$ evaluates to the
value ``True''. One na\"ive way to solve this problem is to brute-force
all possible truth assignments and see if there exist any assignment
that will evaluate $\phi$ to ``True''. Suppose that there are $n$
variables and $m$ clauses, we will take up to $O(mn)$ time to check if
every clause is satisfiable. However, since there are $2^n$ different
truth assignments, we will take a total of $O(2^n nm)$ time
\cite{FK10}. Classical
algorithms were improving on this by exploiting structural properties
of the satisfiability problem and in particular its variants. The
basic type algorithms are called DPLL algorithms
--- by the initials of the authors of the corresponding
papers \cite{DLL62,DP60} --- and the main
idea is to branch the algorithm over variables where one can, from
the formula, in each of the branchings deduce consequences which
allow to derive values of some further variables as well, so that the
overall amount of the run time can be brought down. For the analysis
of the runtime of such algorithms, we also refer to the work of
Eppstein \cite{Ep01,Ep06}, Fomin and Kratsch \cite{FK10}
and Kullmann \cite{Kul99}.

A variant of SAT is the Exact Satisfiability problem (XSAT),
where we require that the satisfying assignment has exactly 1 of the
literals to be true in each clause, while the other literals in the same clause
are assigned false.
If we have at most 3 literals per
clause with the aim of only having exactly 1 literal to be true, then
the whole problem is known as Exact 3-Satisfiability (X3SAT) and this
is the problem which we wish to study. Wahlstr\"om \cite{Wah07}
provided an X3SAT solver which runs in time $O^*(1.0984^n)$ and
subsequently there were only slight improvements; here $n$ is, as
also always below, the number of variables of the given instance
and $O^*(g(n))$ is the class of all functions $f$ bounded by some polynomial
$p(\cdot)$ (in the size of the input) times $g(n)$. 
The problems mentioned before, SAT, 3SAT and X3SAT
are all known to be NP-complete. More background information to the
above bounds can be found in the PhD theses and books of Dahll\"of
\cite{Dah06}, Gaspers \cite{Gas10} and Wahlstr\"om \cite{Wah07}.

The runtime of SAT, 3SAT, XSAT and X3SAT have been well-explored. Sometimes, 
instead of just finding a solution instance to a problem, we are interested
in finding many ``diverse'' solutions to a problem instance. Generating
``diverse'' solutions is of much importance in the real world and can
be seen in areas such as Automated Planning, Path Planning and Constraint
Programming \cite{VK16}. How does one then
measure the ``diversity'' of solutions? This combinatorial aspect
can be investigated naturally with the notion of the Hamming Distance.
Given any two satisfying assignments
to a satisfiability problem, the Hamming Distance problem seeks to find the
number of variables that differ between them. The Max Hamming Distance
problem therefore seeks to compute the maximum number of variables
that will defer between any two satisfying assignments.
If we are interested in the ``diversity'' of exact satisfying assignments,
then the problem is defined as Max Hamming Distance XSAT (X3SAT) accordingly. 
The algorithm given in this paper actually provides information about 
the number of pairs of
solutions which have Hamming distance $k$, for  $k=0,1,\ldots,n$, which could
potentially have uses in other fields such as error correction.

A number of authors have worked in these area previously as well.
Crescenzi and Rossi \cite{CR02} as well as Angelsmark and Thapper
\cite{AT04} studied the question to determine the maximum Hamming distance
of solutions of instances of certain problems.
Dahll\"of \cite{Dah05,Dah06} gave two algorithms for Max Hamming Distance
XSAT problem in $O^*(2^n)$ and an improved version in $O^*(1.8348^n)$.
The first algorithm enumerates all possible subset of all sizes while checking
that they meet certain conditions. The second algorithm uses techniques
found in DPLL algorithms. Fu, Zhou and Yin \cite{FZY12} specialised
on the X3SAT problem and provided an algorithm to determine the Max Hamming
Distance of two solutions of an X3SAT instance in time $O^*(1.676^n)$.
Recently, Hoi and Stephan \cite{HS19} gave an algorithm to solve the 
Max Hamming Distance XSAT problem in $O(1.4983^n)$. 



The main objective of this paper is to propose an algorithm in
$O(1.3298^n)$ time to solve the Max Hamming Distance X3SAT problem.
The output of the algorithm is a polynomial $p$ which gives
information about the number $a_k$ of pairs of solutions of Hamming distance
$k$, for $k=0,1,\ldots,n$. The algorithm does so by simplifying in parallel
two versions $\phi_1,\phi_2$ of the input instance and the
main novelty of this algorithm is to maintain the same structure of
$\phi_1$ and $\phi_2$ and to also hold information
about the Hamming distance of the current and resolved variables while carrying
out an DPLL style branching algorithm.

\ifx\versy\webpage
Section~\ref{sec:comparision}
compares the approach taken with other known methods.
\fi

\section{Basic Approach}

\noindent
Suppose a X3SAT formula $\phi$ over the set of $n$ variables $X$ is
given. The aim is to find the largest Hamming distance possible
between two possible value assignments $\beta_1, \beta_2$ to
the variables which are solutions of $\phi$, that is, make true
exactly one literal in each clause of $\phi$.

To this end, the algorithm presented in this paper computes a polynomial 
(called {\em HD-polynomial})
in $u$, with degree at most $n$, such that the coefficient $c_k$ of
$u^k$ gives the number of solution pairs $(\beta_1,\beta_2)$ such that
the Hamming distance between $\beta_1$ and $\beta_2$ is $k$.
The degree of this polynomial will then provide the largest Hamming
distance between any pair of solutions.

\begin{example}
We consider the formula $\phi$ = $(x_1 \vee x_2 \vee x_3) \wedge (x_1 \vee x_
4 \vee x_5)
\wedge (x_1 \vee x_6 \vee x_7) \wedge (x_2 \vee x_4 \vee \neg x_6)$.
Exhaustive search gives for this X3SAT formula the following four solutions:
\begin{center}
\begin{tabular}{|c|c|c|c|c|c|c|}
\hline
$x_1$ & $x_2$ & $x_3$ & $x_4$ & $x_5$ & $x_6$ & $x_7$ \\ \hline
$1$ & $0$ & $0$ & $0$ & $0$ & $0$ & $0$ \\ \hline
$0$ & $1$ & $0$ & $0$ & $1$ & $1$ & $0$ \\ \hline
$0$ & $0$ & $1$ & $1$ & $0$ & $1$ & $0$ \\ \hline
$0$ & $0$ & $1$ & $0$ & $1$ & $0$ & $1$ \\ \hline
\end{tabular}
\end{center}
So there are $16$ pairs of solutions among which four pairs have
Hamming distance $0$ and twelve pairs of Hamming distance $4$.
The intended output of the algorithm is the
polynomial $12 u^4 + 4 u^0$ which indicates that there are four
pairs of Hamming distance $0$ and twelve pairs of Hamming distance $4$.
\end{example}

\noindent
The reason for choosing this representation is that our algorithm often
needs to add/multiply possible partial solutions, which can
be done easily using these polynomials whenever needed.

The brute force approach would be to consider a search tree, with
four branches at the internal nodes --- $(0,0),(0,1),(1,0),(1,1)$
based on values assigned to some variable $x$ for the two possible 
solutions being compared.
If at a leaf the candidate value assignments $(\beta_1,\beta_2)$ 
formed by using the values chosen along the path from the root
are indeed both solutions for $\phi$ and their Hamming distance is $k$,
then the polynomial calculated at the leaf would be $u^k$;
if any of $(\beta_1,\beta_2)$ are not solutions then the polynomial 
calculated at the leaf would be $0$.
Then, one adds up all the polynomials at the leaves to get the result.
This exhaustive search has time complexity 
(number of leaves)${} \times poly(n,|\phi|)=4^n \times poly(n,|\phi|)$ 
for $n$ variables.

For $x \in X$ and $i,j \in \{0,1\}$,
let $q_{x,i,j}$ be $u$ if $i \neq j$ and $1$ otherwise.
The above brute force approach for computing the HD-polynomial 
would be equivalent to computing
$$\sum_{(\beta_1,\beta_2)} \prod_{x \in X} \  \ q_{x,\beta_1(x),\beta_2(x)},$$
where $(\beta_1,\beta_2)$ in the summation ranges over the pair of
solutions for the X3SAT problem $\phi$. 

However, we may not always need to do the full search as above.
We will be using a DPLL type algorithm, where we use branching
as above, and simplifications at various points to reduce the
number of leaves in the search tree.
Note that the complexity of such algorithms is 
proportional to the number of leaves, modulo a polynomial factor:
that is, complexity is 
$O(poly(n,|\phi|) \times ($number of leaves in the search tree$))=
O^*$(number of leaves in the search tree).

As an illustration we consider some examples where the problems
can be simplified. If there is a clause $(x,y)$, then 
$x = \neg y$ for any solution which satisfies the clause. Thus, $x$
and $y$'s values are linked to each other, and we only need to explore
the possibilities for $y$ and can drop the branching for $x$
(in addition one needs to do some book-keeping to make sure the difference in
the values of $y$ in two solutions also takes care of the difference 
in the values of $x$ in the two solutions; this book-keeping will be explained
below).
As another example,
if there is a clause $(x,x,z)$, then value of $x$ must be $0$ in
any solution which satisfies the clause.
Our algorithm would use several such simplifications to bring down 
the complexity of finding the largest Hamming distance. In the simplification
process, we will either fix values of some of the variables, or link
some variables as above, or branch on a variable $x$ to restrict possibilities
of other variables in clauses involving $x$ and so on (more details below).

In the process, we need to maintain that the HD-polynomial generated
is as required.
Intuitively, if we consider a polynomial calculated at any node as
the sum of the values of the polynomials in the leaves which are
its descendant, then the value of the polynomial
calculated at the root of the search tree gives  the HD-polynomial we want.
For this purpose, we will keep track of polynomials named $p_{main}$
and $p_{x,i,j}$, which start with $p_{main}$ being $1$,
and polynomials $p_{x,i,j}=q_{x,i,j}$, for $x \in X, i,j \in \{0,1\}$
(here $q_{x,i,j}$ is $u$ for $i \neq j$, and $1$ otherwise).
If there is no simplification done, then
at the leaves, the polynomial $p_{main}$ will become
the product of $p_{x,i,j}$, $x \in X$, for the values $(i,j)$ taken by
$x$ for the two solutions in that branch.
When doing simplification via linking of variables, or assigning truth value to
some variables, etc. we will update these polynomials, so as to maintain
that the polynomial calculated at the root using above
method is the HD-polynomial we need.
More details on this updating would be given in the following section.

\section{Algorithm for Computing HD-polynomial}

\noindent
In this section we describe the algorithm for finding the HD-polynomial
for any X3SAT formula $\phi$.
Note that we consider clause $(x,y,z)$ to be same as $(y,x,z)$, that
is order of the literals in the clause does not matter.
We start with some definitions.

Notation: For a formula $\phi$ with variable $x$, we use the
notation $\phi[x=i]$ to denote the formula obtained by replacing all
occurence of $x$ in $\phi$ by $i$. Similarly, for a set $P$ containing
values/definitions of some parameters, including $p_1,p_2$, we use
$P[p_1=f,p_2=g]$ to denote the modification of $p_1$ to $f$, $p_2$ to 
$g$ (and rest of the parameters remaining the same).

\begin{definition}
Fix a formula $\phi$:
\begin{enumerate}[(a)]
\item For a literal / variable $x$, $x'$ and $x''$ and other primed
versions are either $x$ or $\neg x$, i.e., they use the same variable
$x$, which may or may not be negated.

\item Two clauses $c,c'$ are called {\em neighbours} if they share a common
variable. For example, $(x,y,z)$ and $(\neg x,w,r)$ are neighbours.

\item 
Two clauses are called {\em similar} if one of them can be obtained from the 
other
just by negating some of the literals. They are called {\em dissimilar} if
they are not similar.
For example, $(x, y)$ is similar to $(x, \neg y)$,
$(1, x,y)$ is similar to $(0, \neg x, y)$, 
$(x, z)$ is dissimilar to $(x, y)$ and
$(x, \neg x, z)$ is dissimilar to $(x, z, \neg z)$.

\item Two X3SAT formulas have the {\em same structure} if they have the same
number of clauses and there is a 1--1 mapping between these clauses 
such that the mapping maps a clause to a similar clause.

\item A set of clauses $C$ is called {\em isolated} (in $\phi$), if 
none of the clauses in $C$ is a neighbour of any clause in $\phi$ which is
not in $C$.

\item 
A set $I$ of variables is {\em semisolated in $\phi$ by $J$} if all the clauses
in $\phi$ either contain only variables from $I \cup J$, or do not
contain any variable from $I$.
We will be using such $I$ and $J$ for
$|I| \leq 10$ and $|J| \leq 3$ only to simplify some cases.

\item We say that $x$ is {\em linked} to $y$, if 
we can derive that $x=y$ (respectively, $x = \neg y$) in any possible
solution using constantly many clauses of the X3SAT formula $\phi$
as considered in our case analysis (a constant bound of $20$ is enough).
In this case we say that value $i$ of $x$ is linked
to value $i$ of $y$ (value $i$ of $x$ is linked to value $1-i$ of $y$
respectively).
\end{enumerate}
\end{definition}

\begin{definition}[see Monien and Preis \cite{MP06}]
Suppose $G=(V,E)$ is a simple undirected graph. A {\em balanced bisection} is
a mapping $\pi:V \to \{0,1\}$ such that, 
for $V_i=\{v: \pi(v)=i\}$, $|V_0|$ and $|V_1|$ differ by at most one.
Let $cut(\pi)=|\{(v,w): v \in V_0, w \in V_1\}|$. The bisection
width of $G$ is the smallest $cut(\cdot)$ that can be obtained for a balanced
bisection.
\end{definition}


\noindent
Suppose $\phi$ is the original X3SAT formula given over $n$ variable
set $X$.
Our main (recursive) algorithm is $\MHD(\phi_1,\phi_2,s_1,s_2,V,P)$,
where $\phi_1,\phi_2$ are formulas with the same structure over
variable set $V
\subseteq X$,
$s_1,s_2$ are some value assignments to variables from $X$ 
and $P$ is a collection of polynomials (over $u$) for $p_{main}$ and
$p_{x,i,j}$, $x \in X$, $i,j \in \{0,1\}$.
Intuitively, $p_{main}$ represents the portion of the polynomial
which is formed using variables which have already been fixed (or implied)
based on earlier branching decisions.

Initially, algorithm starts with $\MHD(\phi_1=\phi,\phi_2=\phi,V=X,
s_1=\emptyset,s_2=\emptyset,P)$,
where $\phi$ is the original formula given for which we want to find
the Hamming distance, $X$ is the set of variables for $\phi$,
$s_1,s_2$ are empty value assignments, $p_{main}=1$, 
$p_{x,i,j}=q_{x,i,j}$.

Intuitively, the function
$\MHD(\phi_1,\phi_2,s_1,s_2,V,P)$ returns
the polynomial $p_{main} \times$ 
$\sum_{(\beta_1,\beta_2)}$ $\prod_{x \in V}[p_{x,\beta_1(x),\beta_2(x)}]$,
where $\beta_1,\beta_2$ range over value assignments to variables in $V$
which are satisfying for the formula $\phi_1$ and $\phi_2$ respectively,
and which are consistent with the value assignment in $s_1,s_2$, if any,
respectively.
Thus, if we consider the search tree, then the node representing
$\MHD(\phi_1,\phi_2,s_1,s_2,V,P)$ basically represents the polynomial
formed 
$$\sum_{(\beta_1,\beta_2)} \prod_{x \in X} \  \ q_{x,\beta_1(x),\beta_2(x)},$$
where $(\beta_1,\beta_2)$ in the summation ranges over the pair of
solutions for the X3SAT problem $\phi$, consistent with the 
choices taken for the branching variables in the path from the root to the
node. Over the course of the algorithm, the following steps will be done:  

\begin{enumerate}[(a)]
\item using polynomial amount of work (in size of $\phi$)
branch over some variable or group of variables. That is, if 
we branch over variable $x$, we consider
all possible values
for $x$ in $\{0,1\}$ for $\phi_1, \phi_2$ (consistent with
$s_1(x),s_2(x)$ respectively), and then evaluate
the corresponding subproblems: note that $\MHD(\phi_1,\phi_2,s_1,s_2,V,P)$
would be the sum of the answers returned by (upto) four subproblems created 
as above: where in the subproblem for $x$ 
being fixed to $(i,j)$ in $(\phi_1,\phi_2)$ respectively,
$p_{main}$ gets multiplied by $p_{x,i,j}$ and $x$ is dropped from
$V$.

\item simplify the problem,
using polynomial (in size of $\phi$) amount of work, 
to $\MHD(\phi_1',\phi_2',s_1',\linebreak[3] s_2',V',P')$, where 
we reduce the number of variables in $V$ or the number of clauses
in $\phi_1',\phi_2'$.
\end{enumerate}

\noindent
Note that all our branching/simplication rules will maintain the
correctness of calculation of $\MHD(\ldots)$ as described above.

Thus, the overall complexity of the algorithm is 
$O(poly(n,|\phi|) \times [$number of leaves in search tree$])$.
In the analysis below thus, whenever
branching occurs, reducing the number of variables from $n$
to $n-r_1,n-r_2,\ldots,n-r_k$ in various branches, then
we give a corresponding $\alpha_0$ such that for
all $\alpha \geq \alpha_0$, $\alpha^n \geq 
\alpha^{n-r_1}+ \alpha^{n-r_2}+\ldots \alpha^{n-r_k}$. 
Having these $\alpha_0$'s for each of the cases below would thus give
us that the overall complexity of the algorithm is at most
$O(poly(n,|\phi|)*\alpha_1^n)$, for any $\alpha_1$ larger than any of the
$\alpha_0$'s used in the cases.

All of our modifications done via case analysis below
would convert similar clauses to similar clauses.
Thus, if one starts with $\phi_1=\phi_2$, then as we proceed
with the modifications below, the corresponding clauses in the modified 
$\phi_1,\phi_2$ would remain similar (or both dropped) in 
the new (sub)problems created. Thus, $\phi_1,\phi_2$ will always have
the same structure.

Our algorithm/analysis is based on two main cases. Initially, first
case is
applied until it can no longer be applied. Then, Case 2 applies, repeatedly
to solve the problem (Case 2 will use simplifications as in
Case 1.(i) to (iv), but no branching from Case 1).
The basic outline of the algorithm is
given below, followed by the detailed case analysis.

\medskip
\noindent
{\bf Algorithm Max Hamming Distance X3SAT}: $\MHD(\phi_1,\phi_2,V,s_1,s_2,P)$
\begin{algorithmic}
\STATE {\bf Output:} The polynomial $p_{main} \times$ 
$\sum_{(\beta_1,\beta_2)}$ $\prod_{x \in V}[p_{x,\beta_1(x),\beta_2(x)}]$,
where $\beta_1,\beta_2$ range over value assignments to variables in $V$
which are satisfying for the formula $\phi_1$ and $\phi_2$ respectively,
and which are consistent with the value assignment in $s_1,s_2$, if any,
respectively.

Note: As $\phi_1,\phi_2$ have the same structure, the statements
   below about two clauses being neighbours, or involving $k$-variables
   (and other similar questions) have the same answer for
   both $\phi_1,\phi_2$.

  \IF{(some clause cannot be satisfied (for example $(0,0,0)$ or 
   $(1,x,\neg x)$) in $\phi_1$ or $\phi_2$)}
    \STATE return $0$. This is Case 1.(i).
  \ELSIF{(for some variable $x \in V$, $s_1(x)$ and $s_2(x)$ are both
    defined) or ($x$ does not appear in any of the clauses)}
    \STATE
     return $\MHD(\phi_1,\phi_2,
       s_1,s_2,V-\{x\},P[p_{main}=p_{main}\times 
        (\sum_{i,j}p_{x,i,j})])$, where summation is over pairs of $(i,j)$
     which are consistent with $(s_1(x),s_2(x))$ (if defined).
     This is Case 1.(ii).
  \ELSIF{(some clause contains at most two different variables in its
      literals)}
    \STATE simplify ($\phi_1$, $\phi_2$) according to Case 1.(iii)
         and return the answer from the updated MHD problem.
  \ELSIF{(there are two clauses sharing exactly 2 common variables)}
    \STATE simplify ($\phi_1$, $\phi_2$) according to Case 1.(iv)
         and return the answer from the updated MHD problem.
  \ELSIF{(there is a variable appearing in at least 4 dissimilar clauses)}
     \STATE branch on this variable and do follow-up linking of the
     variables according to Case 1.(v), return the sum of the
    answers obtained from the subproblems.
  \ELSIF{(there is a clause with at least four dissimilar neighbours and
    there is a small set $I$ of variables which are semiisolated by 
    a small set $J$ of variables and conditions prescribed in Case 1.(vi) below
    hold; we use this only if $|I|\leq 10, |J| \leq 3$)}
    \STATE branch on all variables except one in $J$ and simplify 
       according to Case 1.(vi) and return the sum of the answers
      obtained from the subproblems.
\ELSIF{(there is a clause with at least 4 dissimilar neighbouring clauses)}
     \STATE branch on upto three variables and do follow-up linking
        according to Case 1.(vii) and return the sum of the answers from the 
        subproblems.
 \ELSE
  \STATE In this case all the clauses have at most three dissimilar neighbours,
      no variable appears in more than 3 dissimilar clauses and each clause
      has exactly three variables and no two dissimilar clauses share
      two or more variables.
  \STATE As described in Case 2 below, one can branch on some variables 
    and after simplification, have two sets of clauses in $\phi_1$ ($\phi_2$)
     which have no common variables. Furthermore, as the clauses
     do not satisfy the preconditions for Case 1, they again fall in
     Case 2, and we can repeatedly branch/simplify the formulas until
     the number of variables/clauses become small enough to use brute force.
  \ENDIF
\end{algorithmic}

\subsection{Case 1} \label{se:case1}

\noindent
This case applies when either some clause is not satisfiable irrespective of
the values of the variables (case (i)) or some variable in $V$'s value 
has already
been determined for both $\phi_1, \phi_2$ (case (ii))
or some clauses in $\phi_1$ (and thus $\phi_2$)
use only one or two variables (case (iii)),
or two dissimilar clauses have two common variables (case (iv)), 
or some variable appears in four dissimilar clauses (case (v))
or some clause has four dissimilar clauses as neighbours (which is 
divided into two subcases (vi) and (vii) below for ease of analysis).

The subcases here are in order of priority.
So (i) has higher priority than (ii) and (ii) has higher priority than 
(iii) and so on.

\begin{enumerate}[(i)]
\item If there is a clause which cannot be satisfied 
(for example the clauses $(0, 0, 0)$ or $(1, 1, x)$ or
$(1, x, \neg x)$) whatever
the assignment of values to the variables consistent with $s_1,s_2$
in either $\phi_1$ or $\phi_2$ respectively,
then $\MHD(\phi_1,\phi_2,s_1,s_2,V,P)=0$.

\item If a variable $x \in V$ is determined in both 
$\phi_1, \phi_2$ (i.e., $s_1(x)$ and $s_2(x)$ are defined),
or variable $x$ does not appear in any of the clauses, then
do the simplification:
update $p_{main}$ to $p_{main}\times 
(\sum_{i,j} p_{x,i,j})$, where
$i,j$ range over value assignments to $x$ in $\phi_1, \phi_2$ which
are consistent with $(s_1(x),s_2(x))$ (if defined) respectively.
That is, answer returned in this case is 
$\MHD(\phi_1[x=s_1(x)],\phi_2[x=s_2(x)],
s_1,s_2,V-\{x\},P[p_{main}=p_{main}
\times (\sum_{i,j} p_{x,i,j}))$, where the summation is over $i,j$
consistent with $s_1(x),s_2(x)$, if defined.

\item 
If there is a clause which contains only one variable.
Then, either the value of the variable is determined (for example
when the clause is of the form $(x, \neg x, \neg x)$ or $(x)$,
for some literal $x$, which is satisfiable only via $x=1$),
or the clause is unsatisfiable (for example when it is of the form 
$(x, x)$ or $(x, x, x)$ --- in which case we have
that $\MHD(\phi_1,\phi_2,s_1,s_2,V,P)=0$) or it does not
matter what the value of the variable is for the clause to be satisfied
(for example, when the clause is $(x, \neg x)$).
Thus, we can drop the clause and note down the value of the variable
in the corresponding $s_i$ if it is determined (if this
is in conflict with the variable having been earlier determined
in $s_i$, then $\MHD(\phi_1,\phi_2,\ldots)=0$). Note that $x$ may be
determined in only one of $\phi_1,\phi_2$, thus we do not update the
$x$ appearing
in any of the remaining clauses of $\phi_1,\phi_2$ to maintain that the
clauses of $\phi_1,\phi_2$ are similar.

If there is a clause which contains literals involving exactly two variables, 
$x$ and $y$, then $x$ and $y$ can be linked, 
either as $x=y$ or $x=\neg y$, as we must have exactly one literal
in the clause which is true for any satisfying assignment. Thus, we can replace 
all usage of $y$ by $x$ (or $\neg x$) in both $\phi_1, \phi_2$,
drop the variable $y$ from $V$ and
correspondingly, update, for $i,j \in \{0,1\}$,
$p_{x,i,j}$ to $p_{x,i,j} \times p_{y,i',j'}$,
based on the linking of values $i$ for $x$ in $\phi_1$
($j$ for $x$ in $\phi_2$ respectively) to value $i'$ for $y$ in $\phi_1$
($j'$ for $y$ in $\phi_2$ respectively).
Here, in case value of $y$ is determined in $s_1,s_2$, then
the value of $x$ is correspondingly determined --- and in case it
is in conflict with an earlier determination then $\MHD(\phi_1,\phi_2,\ldots)$
is $0$.

So for below assume no clause has literals involving at most two variables.

\item Two clauses share two of the three variables in the literals:

Suppose the clauses in $\phi_1$ are $(x, y, w)$ and
$(x', y', z)$, where $x,x'$ (similarly, $y,y'$) are literals over 
same variable.

If $x=x', y=y'$, then we have $w=z$;

If $x=\neg x', y= \neg y'$, then we must have $w=z=0$;

If $x=x', y= \neg y'$, then we must have $x=0$ and $w = \neg z$;
(case of $x=\neg x'$ and $y=y'$ is symmetrical).

In all the four cases, we have that $w$ is linked to $z$ and thus, $z$ can
be replaced using $w$ in both $\phi_1, \phi_2$, with corresponding update
of $p_{w,i,j}$ by $p_{w,i,j} \times p_{z,i',j'}$, where $i',j'$ are obtained 
from $i,j$
based on the linking in $\phi_1,\phi_2$ respectively.
Here, in case value of $z$ is determined in $s_1,s_2$, then
the value of $w$ is correspondingly determined --- and in case it
is in conflict with an earlier determination then $\MHD(\phi_1,\phi_2,\ldots)$
is $0$.

\item
A variable $x$ appears in at least four dissimilar clauses.

By Cases 1(iii) and 1(iv), these four clauses use, beside $x$, variables 
$(y_1,z_1)$, $(y_2,z_2)$, $(y_3,z_3)$, $(y_4,z_4)$ respectively,
which are all different from each other.
We branch based on $x$ having values (for $(\phi_1,\phi_2)$): 
$(0,0), (0,1), (1,0)$ and $(1,1)$.
Then, in each of the four clauses involving $x$, we link the remaining
$y_i$ and $z_i$.
Formulas $\phi_1, \phi_2$ and $s_1,s_2,V,P$ are correspondingly updated
(that is, $x$ is dropped from $V$, $p_{main}$ is updated to
$p_{main} \times p_{x,i,j}$ based on the branch $(i,j)$,
and the linking of the variables is done as in Case 1.(iii)).

Note that for each branch, we thus remove the variable $x$, 
and one of the other variables in each of the four clauses.
Thus we can remove a total of 5 variables for each subproblem based on
the branching for $x$.

\item 
Though technically we need this case only when some clause has
four neighbours (see case (vii) and Proposition~\ref{prop-3}),
the simplification can be done in other cases also.

There exists $(I,J)$, $I \cup J \subseteq V$,
such that $|I| \leq 10$,
$|J| \leq 3$ and $(I,J)$ is semiisolated in $\phi_1$ (and thus in
$\phi_2$ too) and one of the following cases hold.
\begin{enumerate}[\mbox{ } \ 1.]
\item $j= 1$ and $i \geq 1$: Suppose $J=\{x\}$.
In this case, we can simplify the formulas $\phi_1,\phi_2$
to remove variables from 
$I$ as follows:

 Let $W=\{$value vectors $(\beta_1,\beta_2)$ with domain
           $I \cup \{x\}: \beta_i$ is consistent with $s_i$ and
           all clauses involving variables $I \cup \{x\}$ in $\phi_i$
           are satisfied using $\beta_i\}$.

 Let $W_{i,j} = \{(\beta_1,\beta_2) \in W: \beta_1(x) = i 
   \wedge \beta_2(x) = j\}$.

 Let $p_{x,i,j} = p_{x,i,j} \times (\sum_{(\beta_1,\beta_2) \in W_{i,j}}
                \prod_{v \in I} p_{v,\beta_1(v),\beta_2(v)})$.

 Let $V = V - I$.

 Remove from $\phi_1$ and $\phi_2$ all clauses containing variables
           found in $I$.
If $x$ occurs in any clause after the modification, then
answer returned is $\MHD(\phi_1,\phi_2,s_1,s_2,V,P)$, where the parameters
 are modified as above.

IF {$x$ does not occur in any clause after above modification},
then, let
        $p_{main} = p_{main} \times \sum_{i,j} p_{x,i,j}$, where
     summation is over values $(i,j)$ for $x$ which are consistent
     with $(s_1(x),s_2(x))$ if defined.
        $V = V-I-\{x\}$ and the answer returned is
$\MHD(\phi_1,\phi_2,s_1,s_2,V,P)$, where the parameters
 are modified as above.

Here note that $j=0$ case can be similarly handled.
\item $J=\{w,x\}$ and $i \geq 3$, where $x$ appears in some clause $C$
involving a variable not in $I \cup J$.

In this case, we will branch on $x$ and then using the technique of
(vi).1 remove variables from $I$ and then also link the two variables 
different from $x$ in $C$. 
That is, for each $(i,j) \in \{(0,0),(0,1),(1,0),(1,1)\}$, that
is consistent with $(s_1(x),s_2(x))$
subproblem $(\phi_{1,i,j}, \phi_{2,i,j},s_{1,i,j},s_{2,i,j}, V_{i,j},P_{i,j})$
is formed as follows:
\begin{enumerate}[\rm (a)]
\item Set values of $x$ in $\phi_1$ and $\phi_2$ as $i$ and $j$ respectively,
     updating correspondingly $p_{main}$ to
     $p_{main} \times p_{x,i,j}$ and drop $x$ from the variables $V$.
\item Eliminate $I$ from the subproblem by using the method in (vi).1 (as
     $w$ is the only element of corresponding $J$ in the subproblem).
\item Link the two variables in the clause $C$ which are different from
   $x$.
\end{enumerate}
The answer returned by MHD is the sum of the answers of each of the 
four subproblems.

 Note that in each of the four (or less) subproblems,
besides $x$ and members of $I$,
 one linked variable in $C$ is removed. Thus, in total at least 5 variables 
 get eliminated in each subproblem.

\item $j=3$ and $i \geq 4$ and there is a clause which contains at
least two variables $v,w$ from $J$ and another variable from $I$
(say the clause is $(v',w',e)$), where $v',w'$ are literals
involving $v,w$); furthermore
$v,w$ appear in clauses involving variables not from $I$:
In this case we will branch on the variables $v,w,e$ (consistent
with assignments in $s_1,s_2$ to these variables if any),
and simplify each of the subproblems in a way similar to (vi).2 above.
Note that exactly one of $(v',w',e')$ is $1$: giving 9 branches
based on the three choice for each of $\phi_1$ and $\phi_2$.
The answer returned by MHD is the sum of the answers of each of the (upto)
nine subproblems.

Note that apart from the 4 elements of $I$ and $v,w$, for the clauses
using variables not from $I$, we have two clauses involving
$v$ and $w$. The other variables in each of these clauses can be linked up.
Thus, in total for each of the subproblems
at least $8$ variables are eliminated.

\end{enumerate}
\item There exists a clause with 4 dissimilar neighbours and none 
of the above cases apply.

Proposition~\ref{prop-3} below argues that
there is a clause $(x,y,z)$ (in $\phi_1$ and thus in $\phi_2$)
with at least four neighbours so that
further clauses according to one of the following five situations
exist (up to renaming of variables):
\begin{enumerate}[\mbox{ } \ 1.]
\item $(x',a,b)$, $(x'',c,d)$, $(y',a',c')$, $(y'',e,\cdot)$;
\item $(x',a,b)$, $(x'',c,d)$, $(y',e,\cdot)$, $(y'',f,\cdot)$;
\item $(x',a,b)$, $(x'',c,d)$, $(y',a',c')$, $(z',e,\cdot)$;
\item $(x',a,b)$, $(x'',c,d)$, $(y',e,\cdot)$, $(z',f,\cdot)$;
\item $(x',a,b)$, $(x'',c,d)$, $(y',a',e)$, $(z',c',e')$.
\end{enumerate}
where primed versions of the literals use the same variable as
unprimed version (though they maybe negated) and $a,b,c,d,e,f,x,y,z$
are literals involving distinct variables.
Here $\cdot$ stand for literals involving variables different from 
$x,y,z$, where it does not matter what these variables are, 
as long as they do not
create a situation as in cases 1.(i) to 1.(vi).

Suppose the clause corresponding to $(x,y,z)$ in $\phi_2$ is
$(x''',y''',z''')$. Then we branch based on
$(x,x''') =(0,0)$ or
$(x,y,z;x''',y''',z''') \in \{$
$(1,0,0;1,0,0)$,
$(1,0,0;0,1,0)$,
$(1,0,0;0,0,1)$,
$(0,1,0;1,0,0)$,
$(0,0,1;1,0,0)\}$.
That is either both of $x,x'''$ are  $0$, or at least one of
them is $1$ (as before, the branches are only used if the values
are consistent with $s_1,s_2$). 
The branch based on $x$ being $0$ in $\phi_1$ and
$x'''$ being $0$ in $\phi_2$ allows us to remove $x$ and 
three variables from linking $y$ with $z$, $a$ with $b $ and $c$ with $d$
(a total of four variables).
The branch based on the remaining 5 cases allows us to
remove $x, y, z$ and four other variables by linking the variables
other than $x,y,z$ in each of the neighbouring clause in the five possibilities
1--5 mentioned above (a total of seven variables for each of these
subproblems).
\end{enumerate}

\begin{propo} \label{prop-3}
If cases 1.(i) to 1.(vi) above do not apply and if there
is a clause with at least four dissimilar neighbours then
there is also a clause with neighbours as outlined in (vii).
\end{propo}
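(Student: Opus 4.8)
The plan is to begin from the structural consequences of Cases~1.(i)--(v) not being applicable: every clause then has exactly three distinct variables, any two neighbouring clauses share exactly one variable, and (since Case~1.(v) fails) every variable occurs in at most three pairwise dissimilar clauses, hence in at most two dissimilar clauses besides any fixed one. I would then fix a clause $C=(x,y,z)$ with at least four dissimilar neighbours. Each neighbour shares exactly one of $x,y,z$ with $C$, and each of $x,y,z$ lies in at most two dissimilar clauses besides $C$; so by pigeonhole (and since the buckets cap at $2$), after renaming we may assume $x$ lies in exactly two further dissimilar clauses $C_1=(x',a,b)$ and $C_2=(x'',c,d)$, and the failure of Case~1.(iv) applied to $C_1,C_2$ against each other and against $C$ forces $a,b,c,d,x,y,z$ to be seven distinct variables. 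This is the pair $(x',a,b),(x'',c,d)$ occurring in all five patterns of (vii), so the real content is to control the remaining $\ge 2$ neighbours of $C$, all of which use $y$ or $z$.

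Next I would record the basic overlap constraint: for any neighbour $(y',p,q)$ of $C$ on $y$ (symmetrically on $z$), Case~1.(iv) applied against $C_1$ and against $C_2$ gives $|\{p,q\}\cap\{a,b\}|\le 1$ and $|\{p,q\}\cap\{c,d\}|\le 1$, together with $p,q\notin\{x,y,z\}$. Hence, up to swapping $C_1\leftrightarrow C_2$ and renaming inside $C_1,C_2$, every such clause has one of the shapes $(y',a',c')$, $(y',a',e)$, or $(y',e,f)$ with $e,f$ fresh. I would then split into cases according to whether the two extra neighbours both lie on $y$, or one lies on $y$ and one on $z$ (extra neighbours beyond four only make the configuration richer), and according to which of these three shapes each extra neighbour takes; a finite check then shows each combination is either literally one of the patterns 1--5 of (vii), or is ``degenerate'' in the sense that no clause produced so far carries a genuinely new variable in a position marked $e$, $f$, or $\cdot$.

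Finally, for each degenerate combination I would argue that all clauses produced so far live on a set of at most ten variables, and --- using once more that each of $a,b,c,d,z$ lies in at most one further dissimilar clause --- that this set extends to a pair $(I,J)$ with $|I|\le 10$, $|J|\le 3$ that is semiisolated in $\phi_1$ and meets the hypotheses of one of the three sub-cases of Case~1.(vi); since Case~1.(vi) is assumed inapplicable, no degenerate combination can occur, and one of patterns 1--5 must be realised. The main obstacle I anticipate is precisely this last step: in each degenerate branch one must exhibit the explicit semiisolated pair $(I,J)$, check its size against the bounds $10$ and $3$, and match it with sub-case~1, 2, or 3 of Case~1.(vi). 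Keeping the variable count within those bounds, and verifying that the relevant clauses genuinely form a semiisolated set rather than leaking into the rest of $\phi_1$, is the delicate bookkeeping; everything else is a routine finite case analysis once the two overlap inequalities above are in hand.
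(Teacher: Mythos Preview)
Your setup and the initial case split match the paper's: isolate the two $x$-clauses $C_1=(x',a,b)$ and $C_2=(x'',c,d)$, then classify the remaining neighbours of $C=(x,y,z)$ according to how many fresh variables they carry. The gap is in your handling of the fully degenerate branch, where every extra neighbour of $C$ draws both of its non-$\{x,y,z\}$ variables from $\{a,b,c,d\}$. You propose to dispatch each such branch by exhibiting a semiisolated pair $(I,J)$ with $|J|\le 3$, but this need not exist. For a concrete instance take $C_3=(y',a',c')$ and $C_4=(z',b',d')$: none of patterns 1--5 is realised at $C$, yet each of $y,z,a,b,c,d$ lies in only two of the five clauses $C,C_1,\ldots,C_4$ and may therefore occur in one further dissimilar clause. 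If several of them do, with genuinely new variables, then any $I$ built from the seven-variable core $\{x,y,z,a,b,c,d\}$ has boundary of size at least $4$, and no smaller choice of $I$ fares better; so the argument you sketch for Case~1.(vi) cannot be completed here.

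The paper closes this gap by a \emph{pivot}. In the degenerate branch it asks whether $a$ or $b$ occurs in some clause using a variable outside $\{x,y,z,a,b,c,d,e\}$. If so, it abandons $C$ and re-centres on $C_1=(x',a,b)$: the two $x$-neighbours of $C_1$ are $C$ and $C_2$, so after relabelling $y,z,c,d$ play the roles previously held by $a,b,c,d$; the degenerate clause $C_3=(a',y',c')$ then becomes an $a$-neighbour of $C_1$ of the required shape, and the outside clause on $a$ or $b$ supplies a further neighbour $(a''/b',f,\cdot)$ with fresh $f$. That realises one of patterns 1--4 at $C_1$ rather than at $C$. Only when neither $a$ nor $b$ leaks out does the paper fall back on Case~1.(vi), and in that situation the boundary is genuinely contained in $\{c,d,e\}$, so $|J|\le 3$ holds. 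Your plan is missing this change of base clause; without it the bookkeeping you correctly flag as the main obstacle is in fact impossible in some sub-cases.
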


\begin{proof}
Below primed versions of variables denote a literal involving
the same variable --- though it may be negated version.
Given a clause $(x,y,z)$ with at least four dissimilar neighbours, 
without loss of generality assume that $x, y, z$ are not negated in
this clause (otherwise,
we can just interchange them with their negated versions).
We let $x$ denote a
variable which is in at least two further dissimilar clauses.
In the light of Cases 1.(iii), 1.(iv) not applying, these clauses have
new variables $a,b,c,d$, say
$(x',a,b)$ and $(x'',c,d)$ (again without loss of generality,
$a,b,c,d$ are not negated). In light of Case 1.(v) not applying,
variable $x$ is used in no further clause.

If two new variables $e, f$, different from $a,b,c,d,x,y,z$ appear
in some clauses involving $x,y,z$ then
there are two clauses
of the form (A) $(y',e,\cdot)$ and $(y'/z',f,\cdot)$,
or (B) $(y',e,f)$ and $(y'/z',a',c')$ (note that in case (B), both
$a,b$ (similarly, both $c,d$) cannot appear in the clause 
as case 1.(iv) did not apply). Thus, 1.(vii).2 or 1.(vii).4 (in case (A))
or 1.(vii).1 or 1.(vii).3 (in case (B)) apply.

Now, assume that at most one other variable $e$, appears in any clause
involving $x,y,z$ besides $a,b,c,d$.
Without loss of generality suppose the
third neighbour of $(x,y,z)$ was $(y',a',\cdot)$, where $\cdot$ involves
variable $c$ or $e$ (it cannot involve $b$ or $z$ as Case 1.(iv) did
not apply).
Now, if $a$ or $b$ appears in a further outside clause involving a
variable other than $x,y,z,a,b,c,d,e$, then
$(x',a,b)$ has neighbours $(x,y,z), (x'',c,d), (a',y',c'/e'), (a''/b',f,\cdot)$
and thus 1.(vii).1, 1.(vii).2, 1.(vii).3 or 1.(vii).4 apply (with interchanging
of names of $y$ with $a$ and $z$ with $b$).
If none of $a$ or $b$ appears in a further outside clause involving
a variable other than $x,y,z,a,b,c,d,e$, then 
one of the cases of 1.(vi) applies with 
$I \cup J$ being $\{x,y,z,a,b,c,d\}$ or $\{x,y,z,z,b,c,d,e\}$ (based
on whether $e$ appears with any of $x,y,z$ or not in some clause),
and $J \subseteq \{c,d,e\}$ of the variables which appear in
clauses not involving $\{x,y,z,a,b,c,d,e\}$. Here note that
in case $J=\{c,d,e\}$, then the side condition of 1.(vi).3 is satisfied
using clause $(c,d,x'')$.
\end{proof}

\subsection{Case 2}\label{se:case2}

\noindent
This case applies when all clauses have exactly three variables,
no two clauses have exactly two variables in common,
no variable appears in more than three dissimilar clauses and
dissimilar clauses have at most three dissimilar
neighbours.

As our operations on similar clauses leaves them similar, for ease
of proof writing, we will consider similar clauses in any of the formulas
as ``one'' clause when counting below.

Suppose there are $m$ dissimilar clauses involving $n$ variables.
First note that for this case, 
$m \leq 2n/3$. To see this, suppose we distribute the weight $1$ 
of each variable equally among the dissimilar clauses it belongs to. Then,
each clause may get weight $(1/3,1/2,1)$ or $(1/2,1/2,1/2)$ (or more)
based on whether the variables in the clause appear in $(2,1,0)$ or $(1,1,1)$
other clauses in the worst case.
Thus, weight on each clause is at least $3/2$, and thus there are at most
$2n/3$ dissimilar clauses.

\begin{propo}
For some $\epsilon_m$ which goes to $0$ as $m$ goes to $\infty$, the following
holds.

Suppose in $\phi_1$ (and thus $\phi_2$)
there are $n$ variables and $m$ dissimilar clauses each having three
literals involving three distinct variables,
such that each clause has at most three dissimilar neighbours and each 
variable appears in at most three dissimilar clauses, and no two
dissimilar clauses have two common variables.

Then, we can select $k \leq m(1/6+\epsilon_m)$ variables, such that
branching on all possible values for all of these variables, and then
doing simplification based on repeated use of Case 1.(i) to 1.(iv)
gives two groups of clauses, each having three literals,
where the two groups have no common
variables, and
\begin{enumerate}[\rm(a)]
\item each clause in each group has at most three dissimilar neighbours,
\item each variable appears in at most three dissimilar clauses,
\item no pair of dissimilar clauses have two common variables,
\item the number of dissimilar clauses in each group is at most $(m-k+2)/2$.
\end{enumerate}
\end{propo}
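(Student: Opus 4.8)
The statement asks us to find a small set of branching variables whose removal splits the clause-graph into two variable-disjoint halves of roughly balanced size, while preserving the three structural invariants (bounded neighbours, bounded variable-degree, no two clauses sharing two variables). The natural framework is a balanced bisection of an appropriate graph, using the Monien--Preis bisection-width bound already cited in the excerpt. First I would build the \emph{clause graph} $G$ whose vertices are the $m$ dissimilar clauses, with an edge between two clauses iff they are neighbours (share a variable). By the hypotheses each vertex of $G$ has degree at most $3$, so $G$ is a graph of maximum degree $3$ on $m$ vertices. The Monien--Preis theorem gives a balanced bisection $\pi$ of $G$ with $cut(\pi) \le m(1/6 + \epsilon_m)$ where $\epsilon_m \to 0$ (this is exactly the subcubic-graph bisection-width bound); let the two sides be $C_0, C_1$ with $||C_0| - |C_1|| \le 1$.

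**From the cut to the branching set.** Each cut edge corresponds to a pair of clauses in opposite sides sharing at least one common variable; since no two dissimilar clauses share two variables, each cut edge pins down exactly one ``shared'' variable. Let $k$ be the number of distinct variables arising this way; then $k \le cut(\pi) \le m(1/6 + \epsilon_m)$, which is the claimed bound on $k$. The plan is to branch on all of these $k$ variables, i.e.\ assign all $2^k \cdot 2^k$ (or fewer, consistent with $s_1, s_2$) combinations of truth values to them in $\phi_1$ and $\phi_2$. After such an assignment, every clause that straddled the cut now contains a fixed literal; so either it is already satisfied (exactly one fixed literal true, remaining literals forced to $0$ via Case 1.(iii)/(ii) simplification), or it is a unit/2-variable clause to be cleaned up by Case 1.(iii), or it becomes unsatisfiable giving a $0$ leaf (Case 1.(i)). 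In every surviving branch, once Case 1.(i)--(iv) is applied to fixpoint, the remaining clauses fall into two groups: those that lie (originally) in $C_0$ and those in $C_1$, and because every inter-group variable was branched on (hence eliminated or made to appear only in already-removed clauses), the two groups share no variable.

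**Verifying the four properties.** Properties (a), (b), (c) should follow because the simplification rules 1.(iii)/(iv) only \emph{link} or \emph{fix} variables and \emph{drop} clauses — they never create a new clause or a new neighbouring relation; linking $y$ to $x$ replaces $y$ by $x$ but cannot raise a variable's dissimilar-clause count above what it was (one has to check the merging of occurrences does not produce a variable of degree $\ge 4$ or two clauses with two common variables; here is where the ``no two dissimilar clauses share two variables'' precondition and the at-most-three-neighbour precondition get used to rule out bad merges, essentially repeating the local analysis of Case 1.(iv)). For property (d): each group $C_i$ had $|C_i| \le \lceil m/2 \rceil = (m+1)/2$ clauses before branching; branching on the $k$ variables and removing the (at least) $k$ straddling clauses together with the clauses that collapse — one must argue that the clauses ``charged'' by the cut get removed, so each side loses at least $k/2$-ish clauses, or more simply bound $|C_i| - (\text{clauses of } C_i \text{ incident to the cut}) $ and show the total after simplification is at most $(m-k+2)/2$. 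I expect the main obstacle to be precisely this bookkeeping in (d) together with carefully showing the structural invariants (a)--(c) survive the linking-based simplification — the bisection itself is a black-box application of Monien--Preis, but arguing that \emph{after} the Case-1 cleanup the clause count on each side drops by the right amount, and that no simplification step secretly creates a degree-$4$ variable or a double-shared-variable pair, requires a short but fiddly case check mirroring Proposition~\ref{prop-3}'s style of reasoning. A secondary subtlety is that $\phi_1$ and $\phi_2$ must be handled in lockstep: since they have the same structure, the same clause graph $G$ serves both, the same bisection and the same branching set $k$ work for both, and the Case-1 simplifications preserve similarity, so the two formulas split identically — this should be remarked but needs no extra work.
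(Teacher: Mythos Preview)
Your overall plan is the same as the paper's: build the clause graph (max degree $3$), apply the Monien--Preis bisection bound to get a cut of size $k \le m(1/6+\epsilon_m)$, branch on the cut variables, and then argue that Case~1.(i)--(iv) simplification leaves two disjoint groups satisfying (a)--(d). So the architecture is right. Two points, however, are not yet in place.

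\medskip
\textbf{The missing idea for (d).} You correctly flag the clause-count bound as the main obstacle, but your sketch (``each side loses at least $k/2$-ish clauses'') does not go through as stated. A priori, a single clause on side~$0$ can absorb up to \emph{three} cut edges (all of its neighbours on side~$1$), so $k$ cut edges might touch as few as $\lceil k/3\rceil$ clauses on each side, giving only $(m+1)/2 - k/3$, which is weaker than $(m-k+2)/2$. The paper closes this gap with a small extra step you are missing: after computing the bisection, one may assume \emph{without loss of generality} that at most one clause has all of its neighbours on the opposite side. This is achieved by a switching argument --- if two such clauses exist, swap one (or a suitable pair) across the cut to strictly decrease the cut size. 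With this WLOG in hand, every clause except possibly one is incident to at most two cut edges, so the $k$ cut edges touch at least $(k-1)/2$ clauses on each side; since every touched clause contains a branched variable and is therefore eliminated by Case~1.(iii), each side drops to at most $(m+1)/2 - (k-1)/2 = (m-k+2)/2$ clauses.

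\medskip
\textbf{A correction for (a)--(c).} Your claim that the preconditions ``rule out bad merges'' is not quite right: linking $y$ with $z$ after branching on $x$ in $(x,y,z)$ \emph{can} create two dissimilar clauses sharing two variables (namely when the unique other clause through $y$ and the unique other clause through $z$ were already neighbours via some variable $a$). The point is not that this is forbidden, but that it is immediately repaired by Case~1.(iv), which links the remaining pair and collapses the two clauses into similar ones; one then recurses on that new link. The paper's argument for (a)--(c) is exactly this two-case check (either one of $y,z$ occurs nowhere else, or each occurs in exactly one further clause) together with the recursive Case~1.(iv) cleanup, and the key observation that the new neighbour relation created by the merge is compensated by the disappearance of $(x,y,z)$.
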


\begin{proof}
To prove the proposition, consider each dissimilar clause 
as a vertex, with edge connecting two dissimilar clauses if they
have a common variable. 
Using the bisection width result \cite{Gas10,GS17,MP06}, one
can partition the dissimilar clauses into two groups (differing
by at most one in cardinality) such that
there exist at most $k \leq (1/6+\epsilon_m)\times m$ edges between the two
groups, that is there are at most $(1/6+\epsilon_m) \times m$ common
variables between the two groups of clauses. 
One can assume without loss of generality
that at most one clause has all its neighbours on the other
side. This holds as if there are two dissimilar clauses, say one in each half,
which have all their neighbours on the other side, then we can switch
these two clauses to the other side and decrease the size of the cut.
On the other hand, if both these clauses (say $A$ and $B$) 
belong to the same side, then
we can switch $A$ to the other side, and switch the side of
one of $B$'s neighbours --- this also decreases the size of the cut.

To see that the properties mentioned ((a), (b) and (c)) are preserved, 
suppose in a clause $(x,y,z)$, we branch on $x$ and thus link $y$ with
$z$; here we assume without loss of generality that
$x,y,z$ are all positive literals. 
Note that as $(x,y,z)$ has at most three neighbours, one of which
contains $x$, there can be at most two other neighbours of the
clause $(x,y,z)$ which contain $y$ or $z$.

First suppose $y$ (respectively $z$)
does not appear in any other clause. Without loss of generality
assume that $y$ gets dropped and replaced by $z$ or $\neg z$ based
on the linking.
Then dropping the clause $(x,y,z)$ and
replacing $y$ by $z$ does not increase the number of dissimilar clauses 
that $z$ appears in, nor does it increase the number
of neighbours of these clauses as there is no change in variable name 
in any clause which is not dropped. 

Next suppose both $y$ and $z$ appear in exactly one other dissimilar clause,
say $(y',a,b)$ and $(z',c,d)$, where $y'$ and $z'$ are literals involving
$y$ and $z$ respectively.
In that case, linking $y$ and $z$ (and replacing $z$ by $y$), makes these
two clauses neighbours (if not already so) --- which is compensated by 
the dropping of the neighbour $(x,y,z)$; the number of
clauses in which $y$ appears remains two.
In case these two clauses were already neighbours (say $a=c$ or $\neg c$), then 
due to application of Case 1.(iv), 
$b$ and $d$ get linked, clauses $(y,a,b)$ and $(z,c,d)$ thus
become similar (resulting in decrease in the neighbour by one for
these clauses) and the above analysis can then be recursively applied
for linking $b$ with $d$.

Now considering the edges (and corresponding common variable for the
edge) in the cut,
and branching on all these variables (while being consistent with
$s_1$ and $s_2$) and then doing simplification as
in Cases 1(i) to 1(iv), we have that
each partition is left with 
at most $(m+1-(k-1))/2$ dissimilar clauses. This holds as,
by our assumption above, except
maybe for one clause, all dissimilar clauses have at most two neighbours on 
the other side.
Thus, by linking the remaining variables for each of the clauses involved
in the cut, we can remove $(k-1)/2$ dissimilar clauses on each side
using Case 1(iii).
\end{proof}

\noindent
Thus, one can recursively apply the above modifications in Case 2
to each of the two groups of clauses, one after other, until all 
the variables have been assigned the values or linked to other variables
(where the leaf cases occur when the number
of dissimilar clauses is small enough to use brute force assigning values to
all of the variables).

Now we count how many variables need to be branched for Case 2 in total
if one starts with $m$ clauses involving $n$ variables.
The worst case happens when $k=(1/6+\epsilon_m)m$ and the total
number of variables which need to be branched on is
$m(1+5/12+5^2/(12^2)+\ldots)*(1/6+\epsilon)$, where one can take $\epsilon$
as small as desired for corresponding large enough $m$.
Thus the number of variables branching would be
$m(2/7+12\epsilon/7) \leq n(4/21+24\epsilon/21)$.
As branching on each variable gives at most $4$ children, the number
of leaves (and thus complexity of the algorithm based on
Case 2) is bounded by $4^{4n/21+o(n)}$.

\subsection{Overall Complexity of the Algorithm}

\noindent
Note that modifications in each of the above cases takes
polynomial time in the original formula $\phi$.

Visualize the running of the above algorithm as a search tree, where
the root of the tree is labeled as the starting
problem $\MHD(\phi,\phi,V=X,s_1=\emptyset,s_2=\emptyset,P)$,
with $P$ having $p_{main}=1$, $p_{x,i,j}=q_{x,i,j}$.

At any node, if a simplification case applies, then the node 
has only one child with the corresponding updated parameters. If a braching
case applies, then the node has children corresponding to the parameters
in the branching.

As the work done at each node is polynomial in the length of $\phi$,
the overall time complexity of the algorithm is $poly(n,|\phi|)\times $
(number of leaves in the above search tree).

We thus analyze the number of possible leaves the search tree
would generate.

Suppose $T(r)$ denotes the number of leaves rooted at a node
$\MHD(\ldots,V,\ldots)$, where $V$ has $r$ variables.

Case 1.(i) to Case 1.(iv) and Case 1.(vi).1 do not involve any branching.

If Case 1.(v) is applied to a $\MHD$ problem involving $r$ variables, then
it creates at most four subproblems, each having at most $r-5$ variables.
Thus, the number of leaves generated in this case is
bounded by $4 T(r-5)$.
Note that $T(r) = O(\alpha^r)$, for $\alpha \geq \alpha_0=1.3196$
satisfies the constraints of this equation.

If Case 1.(vi).2 is applied to a $\MHD$ problem involving $r$ variables, then it
creates at most 4 subproblems each involving at most $r-5$ variables.
Thus, the number of leaves generated in this case is
bounded by $4 T(r-5)$.
Note that $T(r) = O(\alpha^r)$, for $\alpha \geq \alpha_0=1.3196$
satisfies the constraints of this equation.

If Case 1.(vi).3 is applied to a $\MHD$ problem involving $r$ variables, then it
creates at most 9 subproblems each involving at most $r-8$ variables.
Thus, the number of leaves generated in this case is
bounded by $9 T(r-8)$.
Note that any $T(r) = O(\alpha^r)$, for $\alpha \geq \alpha_0=1.3162$
satisfies the constraints of this equation.

If Case 1.(vii) is applied to a $\MHD$ problem involving $r$ variables, then it
creates at most 6 subproblems, one involving at most $r-4$ variables and
the other involving at most $r-7$ variables.
Thus, the number of leaves generated in this case is
bounded by $T(r-4)+5T(r-7)$.
Note that any $T(r) = O(\alpha^r)$, for $\alpha \geq \alpha_0=1.3298$
satisfies the constraints of this equation.

If Case 2 is applied to a $\MHD$ problem of $r$ variables, then it creates 
a search tree which contains at most $O(4^{4r/21+o(r)})$ leaves.
Note that any $T(r) = O(\alpha^r)$, for $\alpha \geq \alpha_0=1.3023$
satisfies the constraints of this equation.

Thus, the formula $T(r)= O(1.3298^r)$ bounds the number of leaves
generated in each of the cases above, for large enough $r$. 
Thus, we have the theorem:

\begin{theorem}
Given a 3XSAT formula $\phi$, one can find in time 
$O(poly(n,|\phi|) \times 1.3298^n)$ 
the maximum hamming distance between any two satisfying assignments for
$\phi$.
\end{theorem}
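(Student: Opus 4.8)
The plan is to assemble the theorem from the case analysis and the complexity bounds already established in Sections~\ref{se:case1} and~\ref{se:case2}, essentially by verifying that the recursion $T(r)$ for the number of leaves of the search tree is dominated by $1.3298^r$. First I would recall the setup: the algorithm $\MHD(\phi,\phi,V=X,s_1=\emptyset,s_2=\emptyset,P)$ with $p_{main}=1$ and $p_{x,i,j}=q_{x,i,j}$ is run, and by the invariant maintained through every branching and simplification rule, the polynomial returned at the root equals $\sum_{(\beta_1,\beta_2)}\prod_{x\in X}q_{x,\beta_1(x),\beta_2(x)}$, where the sum ranges over pairs of X3SAT solutions of $\phi$; hence the coefficient of $u^k$ counts the solution pairs at Hamming distance $k$, and the degree of the polynomial is the maximum Hamming distance. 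The correctness of this invariant follows because each rule — Case 1.(i) through 1.(vii) and the Case 2 simplifications — was designed to preserve the stated semantics of $\MHD(\ldots)$, and because Proposition~\ref{prop-3} guarantees that whenever a clause has four dissimilar neighbours, the precondition of one of the subcases of (vii) is met (after Cases 1.(i)--1.(vi) are exhausted), so the algorithm never gets stuck.

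Next I would handle the running-time bound. Since every node of the search tree does only $poly(n,|\phi|)$ work, the total time is $poly(n,|\phi|)$ times the number of leaves, so it suffices to bound $T(n)$, the number of leaves when $|V|=n$. I would go case by case exactly as in Section~\ref{se:case1}: Cases 1.(i)--1.(iv) and 1.(vi).1 are simplifications (one child, no growth); Case 1.(v) and Case 1.(vi).2 each give at most $4$ subproblems of size $\le r-5$, so $T(r)\le 4T(r-5)$, satisfied by $\alpha^r$ for $\alpha\ge 1.3196$; Case 1.(vi).3 gives at most $9$ subproblems of size $\le r-8$, satisfied by $\alpha\ge 1.3162$; Case 1.(vii) gives at most $6$ subproblems, one of size $\le r-4$ and five of size $\le r-7$, giving $T(r)\le T(r-4)+5T(r-7)$, satisfied by $\alpha\ge 1.3298$; and by the Proposition in Section~\ref{se:case2}, Case 2 on $r$ variables branches on only $\le r(4/21)+o(r)$ variables, each producing at most $4$ children, so it contributes at most $4^{4r/21+o(r)}=O(1.3023^r)$ leaves. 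Taking $\alpha=1.3298$, which dominates all the individual $\alpha_0$'s, we get $T(n)=O(1.3298^n)$, and multiplying by the polynomial factor yields the claimed $O(poly(n,|\phi|)\times 1.3298^n)$.

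The step I expect to be the genuine obstacle — and the one I would need to present most carefully rather than merely cite — is the analysis of Case 2: verifying that the bisection-width bound of Monien and Preis (used via the clause-intersection graph, where $m\le 2n/3$ and the bisection has cut $\le m(1/6+\epsilon_m)$) really does let the algorithm recurse into two variable-disjoint subformulas each with at most $(m-k+2)/2$ dissimilar clauses, while preserving properties (a)--(c) (bounded neighbours, bounded clause-occurrences per variable, no two dissimilar clauses sharing two variables) after the linking simplifications of Cases 1.(iii)--1.(iv). Summing the geometric series $m(1+\tfrac{5}{12}+(\tfrac{5}{12})^2+\cdots)(\tfrac16+\epsilon)$ for the total number of branched variables, bounding it by $n(\tfrac{4}{21}+o(1))$, and checking that the resulting $4^{4n/21+o(n)}$ is below $1.3298^n$ is the delicate arithmetic; everything else reduces to the elementary verification that $1.3298^r\ge 1.3298^{r-4}+5\cdot 1.3298^{r-7}$ and the analogous inequalities, which I would simply state and leave to the reader to confirm numerically.
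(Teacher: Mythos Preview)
Your proposal is correct and follows essentially the same route as the paper: you assemble the bound $T(n)=O(1.3298^n)$ by checking each branching rule's recurrence (Cases 1.(v), 1.(vi).2, 1.(vi).3, 1.(vii), and Case~2) against the candidate base $\alpha=1.3298$, exactly as the paper does in Section~3.3, and multiply by the polynomial per-node work. The only minor remark is that your anticipated ``genuine obstacle'' --- the Case~2 bisection-and-recursion analysis --- is in the paper already packaged as the standalone Proposition in Section~\ref{se:case2}, so in the paper's structure you would simply cite it rather than re-derive the geometric-series bound inside the theorem's proof.
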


\ifx\versy\webpage
\section{Comparing with Reductions to Known Methods}\label{sec:comparision}

\noindent
An early approach of computing maximal Hamming distances between solutions
was an algorithm which (a) enumerates all the solutions on one side
and then (b) finds for each solution of (a) the most distant solution on
the other side. This method exploited that for (b), one can use a method
of maximising a variable weighted X3SAT which by Porschen and
Plagge \cite{PP10} takes approximately time $1.1192^n$; one cannot
say at least, as they did not prove a lower bound for this but only
an upper bound. The performance of this
algorithmic idea mainly depends on the number of solutions in (a). If one
considers $n = 2n'+1$ clauses $(x_1, x_{2m}, x_{2m+1})$ with
$m=1,\ldots,n'$, then there are $1+2^{n'}$ solutions which is, for
most $n$, at least $1.414^n$. So the overall runtime is approximately
$1.5825^n$.

In the following, we want to lay out more in detail why
referring to standard methods like Max 2-CSP or the above
mentioned algorithm does not give
better bounds than the algorithm provided in the current paper.
The following three remarks, the first for the special case
and the next two for the full problem, give some estimated bounds on these
type of approaches. The goal is to try to give a fair comparison based on a
reasonable way of using this approach.

\begin{remy} \label{rem:csp}
Assume that one has to compute the maximum Hamming distance for an
X3SAT formula which meets the specification of Case 2 in the algorithm.
Then one could formalise this as a Max 2-CSP problem as follows:
One makes a graph of all clauses where a clause is considered to be
a set of $3$ nodes. Each node takes a colour from $\{1,2,3\}^2$ where
the coordinate $1,2,3$ indicates whether the first, second or third literal
is made true in the clause, as it is X3SAT, exactly one of these three options
applies. Furthermore, the two coordinates in the pair
refer to the first and the second solution of the X3SAT problem.
If two clauses share a variable, they are neighbours; for neighbours
one makes the hard constraint that the shared variables in the two
nodes are given consistent values. The weak constraint is the Hamming
distance of the two solutions, here one evaluates for each variable the
Hamming distance between the two solutions in the first node where this
variable occurs; thus the distance between the two solutions in the node
is a number from $0$ to $3$ which reflects the Hamming distance of the
variables in the solution which occur in this node first. The hard constraints
have for each pair the weight $n+1$ (greater than the sum of all weak
constraints) in the case that the hard constraint is satisfied and $0$
in the case that the hard constraint is not satisfied. Now, as the
underlying graph has degree at most $3$, the algorithm of Gaspers and
Sorkin \cite{GS17} provides the bound of $9^{\frac{m}{5}+o(m)}$ where
$m$ is the number of nodes in the graph, that is, the number of clauses
in the given formula. This number $m$ is at most $\frac{2}{3} \times n$
due to the special form of the graph. Thus the overall time complexity
is $9^{\frac{2n}{15}+o(n)}$ and this is contained in $O(1.3404^n)$.
Thus the complexity of the na\"ive invocation of Max 2-CSP in an
important special case would give a bound worse than the algorithm
for solving the full problem in this paper; so it pays off to make
a specialised algorithm for the problem of determining
the maximum Hamming distance of two solutions of a X3SAT instance.
\end{remy}

\begin{remy} \label{rem:appendix}
One could also try to solve the full problem with the invocation of the
Max 2-CSP algorithm. Gaspers and Sorkin~\cite{GS17} provide the bound of
$9^{9h/50}$ for a instance with $h$ edges in the underlying graph (which
optimises a sum over the value functions along the edges of the graph
plus a further sum over the value functions of the nodes). Again one would
take the nodes as clauses and for every variable occurring in $k+1$ clauses,
one would make $k$ edges, connecting the first and second clause where it
occurs, the second and third clause where it occurs, $\ldots$, the
$k$-th and $k+1$-st clause where it occurs. Along these edges, one puts
the hard constraint that the corresponding values of the variable are
the same and they carry the weight $n+1$; along the nodes one puts the
weak constraint equal to the Hamming distances of the value-vectors of
those variables in the two solutions which occur in this clause
but do not appear in any earlier clause. The maximum constraint value
can, if there is a solution, only be taken by a pair of solutions which
satisfies all the hard constraints and its value is $n+1$ times the number
of hard constraints plus the maximum Hamming distance.

Let $r$ be the average number of clauses in which
a variable occurs. Now the
time bound for the PSPACE algorithm for this problem is
$9^{9/50 \times (rn-n+o(n))}$ in dependence
of $r$ and $n$. For $r = 3.0$ this is contained in $O(2.2057^n)$, for $r = 2.3$
this is contained in $O(1.6723^n)$ and for $r = 2.0$ this is contained in
$O(1.4852^n)$. For the case that one does not want to specify an average
degree, the time bound can be estimated by $O(9^{9/50 \times (3m-n)+o(m)})$
as every clause has at most $3$ edges connecting to later clauses and
for each variable which occurs the last time in a clause, there is no
edge connecting to a further later clause, thus the $-n$ term.
In the case that one uses exponential space algorithms, there are
slightly better bounds supplied by Scott and Sorkin \cite{SS07}
which are, for the above case, $9^{(13/75+o(1))\times (r-1)\times n}$
giving $O(1.4636^n)$ for $r=2.0$, $O(1.6407^n)$ for $r=2.3$ and
$O(2.1420^n)$ for $r=3.0$.
These algorithms are, even for the moderate value $r=3.0$,
not competitive with Dahll\"of's original algorithm \cite{Dah05,Dah06}
which solves the maximal Hamming distance even for XSAT and not only X3SAT.
Thus it pays off to make a specialised taylormade algorithm rather than
to plug in a known general method for the case of Max Hamming Distance X3SAT.
\end{remy}

\begin{remy}
One might ask why the above approach takes the clauses as vertices
of the CSP graph and not the variables. The main reason is that published
results which give the CSP complexity in terms of vertices are, mainly,
just the paper of Williams \cite{Wil07}: He shows that
one can solve the Max 2-CSP constraint problem in time $O^*(1.732^n)$;
however, this result is for binary variables only. Williams' method does
not allow that a variable takes four values, as otherwise we
could compress pairs of variables into one variable and bring down
the complexity to $O^*(1.732^{n/2})$ and then do the same thing again.
For that reason, to cast Max Hamming X3SAT into this framework, we would
have to go for an $O^*(1.732^{2n})$ algorithm, which is much slower than
Dahll\"of's algorithm \cite{Dah05,Dah06} of $O(1.8348^n)$.

A better way would be to use
Max weighted $2$SAT with $3n$ variables instead of $n$.
Wahlstr\"om \cite{Wah07} provides for this an algorithm in time
$O^*(1.2377^n)$; this algorithm would then be placed on instances
with $3n$ variables which corresponds to $O(1.8961^n)$ which is
slightly above Dahll\"of's algorithm. For this, recall that maximum
weighted $2$SAT assigns to every variable a weight and searches
for a solution of the $2$SAT formula which maximises the weight.
The translation is as follows:
We let $(x_1,\ldots,x_n)$ and $(y_1,\ldots,y_n)$ represent the
two solutions of X3SAT problem; $(z_1,\ldots,z_n)$ is an auxiliary vector with
the constraint that $z_k$ can only be $1$ when $x_k=0$ and $y_k=1$;
this is achieved by putting the clauses $\neg z_k \vee \neg x_k$ and
$\neg z_k \vee y_k$ into the $2$SAT formula.
Furthermore, for each X3SAT clause, we put into the $2$SAT
formula the conditions that no two literals in the clause are satisfied
at the same time, that is, for a clause $x_i \vee x_j \vee \neg x_k$
in the X$3$SAT instance the
corresponding $2$SAT clauses would be $\neg x_i \vee \neg x_j$,
$\neg x_i \vee x_k$ and $\neg x_j \vee x_k$.
Similarly we put the $2$SAT clauses for the $y$-variables.
Now let $i_k$ be the number of clauses containing $x_k$ in the
X$3$SAT instance and $j_k$ be the number of clauses containing $\neg x_k$
in the X$3$SAT instance.
The weights are as follows:
\begin{enumerate}
\item $x_k=1$ has weight $(n+1) \cdot i_k+1$;
\item $x_k=0$ has weight $(n+1) \cdot j_k$;
\item $y_k=1$ has weight $(n+1) \cdot i_k$;
\item $y_k=0$ has weight $(n+1) \cdot j_k+1$;
\item $z_k=1$ has weight $2$;
\item $z_k=0$ has weight $0$.
\end{enumerate}
Note that the Hamming distance is not a linear function and therefore
we need the variable $z_k$
to correct errors when computing the Hamming distance from weights
on $x_k$ and $y_k$; this correction is good when the number of
$z_k=1$ is maximised, given assignments of
$x_1,\ldots,x_n$ and $y_1,\ldots,y_n$ which solve the X3SAT problem.
Now the overall maximal weight of two solutions $x,y$ and $z$ being
maximised is
$$
   2 \cdot m \cdot (n+1) + n + HD(x,y),
$$
where $m$ is the number of clauses in the X3SAT instance.
The reason for this is that every satisfied clause for $x$ contributes
one time $n+1$ and every satisfied clause for $y$ also contributes
one time $n+1$ to the sum.
Additionally, the part of the weights which is not determined by X3SAT
clauses is, for each $k$, equal to $x_k + 1-y_k+2z_k$; the part
$x_k+1-y_k$ is $1$ if $x_k$ and $y_k$ are the same;
$2$ if $x_k=1$ and $y_k=0$ and $0$ if $x_k=0$ and $y_k=1$.
Note that $z_k$ can be $1$ iff $x_k=0$ and $y_k=1$. Therefore,
by chosing $z_k$ suitably, if $x_k$ and $y_k$ differ then the sum of 
the parts of the weight of $x_k,y_k,z_k$ not linked to coding X3SAT
clauses is $2$ else this sum is $1$. Thus taking $x_k,y_k$ and the maximised
$z_k$ into account, the additional weight of $x_k+1-y_k+2z_k$ is
$1+HD(x_k,y_k)$. Taking this for all $n$
variables into account gives $n+HD(x,y)$.

Alternatively, one could use the approach of Porschen and
Plagge \cite{PP10} to solve variable-weighted X3SAT in
time $O(2^{0.16255 \times n})$.
However, X3SAT formulas are not very suitable for coding the Hamming distance
and therefore three additional variables are needed for each pair of $x_k$
and $y_k$, giving a total of $5n$ variables. This blown up problem then has
the performance of $O(1.7566^n)$, so the overall performance is better
than Dahll\"of's algorithm but worse than the one of
Fu, Zhou and Yin \cite{FZY12}.
\end{remy}
\fi

\section{Conclusion and Future Work}

\noindent
In this paper, we considered a branching algorithm to compute the Max
Hamming Distance X3SAT in $O(1.3298^n)$ time. Our novelty lies
in the preservation of structure at both sides of the formula while we branch.

Our method is faster than the na\"ive invocation of the Max 2-CSP algorithm
\ifx\versy\conf
(see the discussion in the second-last section of the technical report
version on http://www.arxiv.org of this paper).
\fi
\ifx\versy\webpage
(see the discussion in the previous section).
\fi
Even if one assumes that every clause has only three neighbours (as in Case 2,
but now from the start), the usage of the Max 2-CSP algorithm results in a
run-time of $9^{2/15 \times n+o(n)}$ which is contained in $O(1.3404^n)$.
Without this assumption, the na\"ive invocation of the Max 2-CSP algorithm
is much worse. Also other invocations of known methods do not give good
timebounds.

Our time bound of $O(1.3298^n)$ is 
achieved by using simple analysis to analyse our branching rules.
Our algorithm uses only polynomial space during its computations.
This can be seen from the fact that the recursive calls at the branchings
are independent and can be sequentialised; each calling instance therefore
needs only to store the local data; thus each node of the call tree uses
only $h(n)$ space for some polynomial $h$. The depth of the tree is
at most $n$ as each branching reduces the variables by $1$;
thus the overall space is at most $h(n) \times n$ space. 

Furthermore, as we determine the number of pairs of solutions
with Hamming distance $k$ for $k=0,1,\ldots,n$, where $n$ is
the number of variables, one might ask
whether this comes with every good algorithm for free or whether there
are faster algorithms in the case that one computes merely the
maximum Hamming distance of two solutions.


\begin{thebibliography}{99}
\bibitem{AT04} Ola Angelsmark and Johan Thapper.
\newblock Algorithms for the maximum Hamming distance problem.
\newblock {\em Recent Advances in Constraints, International Workshop on
  Constraint Solving and Constraint Logic Programming, CSCLP 204, Springer
  Lecture Notes in Computer Science}, 3419:128--141, 2004.

\bibitem{CR02} Pierluigi Crescenzi and Gianluca Rossi.
\newblock On the Hamming distance of constraint satisfaction problems.
\newblock {\em Theoretical Computer Science}, 288(1):85--100, 2002.

\bibitem{Dah05} Vilhelm Dahll\"of.
\newblock Algorithms for Max Hamming Exact Satisfiability.
\newblock {\em International Symposium on Algorithms and Computation},
  ISAAC 2005, {\em Springer Lecture Notes in Computer Science},
  3827:829--383, 2005.

\bibitem{Dah06} Vilhelm Dahll\"of. {\em Exact Algorithms for Exact
  Satisfiability Problems}, Ph.D. dissertation, Department of Computer
  and Information Science, Link\"oping University, 2006. 

\bibitem{DLL62} Martin Davis, George Logemann and Donald W. Loveland.
\newblock A machine program for theorem proving.
\newblock {\em Communications of the ACM}, 5(7):394--397, 1962.

\bibitem{DP60} Martin Davis and Hilary Putnam.
\newblock A computing procedure for quantification theory.
\newblock {\em Journal of the ACM}, 7(3):201--215, 1960.

\bibitem{Ep01} David Eppstein.
\newblock Small maximal independent sets and faster exact graph
  coloring.
\newblock {\em Proceedings of the Seventh Workshop on Algorithms
  and Data Structures}.
\newblock {\em Springer Lecture Notes in Computer Science},
  2125:462--470, 2001.

\bibitem{Ep06} David Eppstein.
\newblock Quasiconvex analysis of multivariate recurrence equations
  for backtracking algorithms.
\newblock {\em ACM Transactions on Algorithms}, 2(4):492--509, 2006.

\bibitem{FK10} Fedor V.\ Fomin and Dieter Kratsch.
\newblock {\em Exact Exponential Algorithms.}
\newblock Texts in Theoretical Computer Science. An
EATCS Series. Springer, Berlin, Heidelberg, 2010.

\bibitem{Gas10} Serge Gaspers.
\newblock {\em Exponential Time Algorithms: Structures, Measures, and Bounds.}
\newblock 216 pages, VDM Verlag Dr.\ M\"uller, 2010.

\bibitem{GS17} Serge Gaspers and Gregory B.\ Sorkin.
\newblock Separate, measure and conquer: faster polynomial-space algorithms for
Max 2-CSP and counting dominating sets.
\newblock {\em ACM Transactions on Algorithms (TALG)}, 13(4):44:1--36, 2017.

\bibitem{FZY12} Linlu Fu, Junping Zhou and Minghao Yin.
\newblock Worst case upper bound for the maximum Hamming distance
X3SAT problem. 
\newblock {\em Journal of Frontiers of Computer Science and
Technology}, 6(7):664-671, 2012. 

\bibitem{Ham50} Richard Wesley Hamming.
\newblock Error detecting and error correcting codes.
\newblock {\em Bell System Technical Journal}, 29(2):147--160, 1950.

\bibitem{Kul99} Oliver Kullmann.
\newblock New methods for 3-SAT decision and worst-case analysis.
\newblock {\em Theoretical Computer Science}, 223(1-2):1-72, 1999. 

\bibitem{MP06} Burkhard Monien and Robert Preis.
\newblock Upper bounds on the bisection width of 3- and 4-regular graphs.
\newblock {\em Journal of Discrete Algorithms}, 4(3):475--498, 2006.

\bibitem{PP10} Stefan Porschen and Galyna Plagge.
\newblock Minimizing variable-weighted X3SAT.
\newblock {\em Proceedings of the International Multiconference of Engineers
  and Computer Scientists}, IMECS 2010, 17--19 March 2010, Hongkong,
  Volume 1, pages 449--454, 2010.

\bibitem{SS07} Alexander D.~Scott and Gregory B.~Sorkin.
\newblock Linear programming design and analysis of fast algorithms
  for Max 2-CSP.
\newblock {\em Discrete Optimization} 4(3--4):260--287, 2007.

\bibitem{Wah07} Magnus Wahlstr\"om.
\newblock Algorithms, measures and upper bounds for satisfiability
  and related problems.
\newblock PhD Thesis, Department of Computer and Information Science,
  Link\"opings Universitet, 2007.

\bibitem{Wil07} R.\ Ryan Williams.
\newblock {\em Algorithms and resource requirements for fundamental problems.}
\newblock PhD.\ thesis, Carnegie Mellon University, 2007.

\bibitem{HS19} Gordon Hoi and Frank Stephan.
\newblock {\em Measure and Conquer for Max Hamming Distance XSAT.}
\newblock International Symposium on Algorithms and Computation, ISAAC 2019,
  LIPIcs Volume 149, pages 18:1--18:20, 2019.

\bibitem{VK16} Satya Gautam Vadlamudi and Subbarao Kambhampati.
\newblock {\em A combinatorial search perspective on diverse solution
  generation.}
\newblock In Thirtieth AAAI Conference on Artificial Intelligence.
\newblock Pages 776--783, 2016.

\end{thebibliography}
\end{document}